\def\@captype{figure}
\newtheorem{theorem}{Theorem}[section]
\newtheorem{formula}{Formula}[section]
\newtheorem{lemma}[theorem]{Lemma}
\newtheorem{remark}[theorem]{Remark}
\newtheorem{definition}[theorem]{Definition}
\numberwithin{equation}{section}
\newcommand{\N}{\mathbb{N}}
\newcommand{\cG}{\mathcal{G}}
\newcommand{\vC}{\vec{C}}
\newcommand{\vt}{\vec{\theta}}
\newcommand{\Pmic}{P_{\mathrm{mic}}}
\newcommand{\Pcan}{P_{\mathrm{can}}}
\newcommand{\be}{\begin{equation}}
\newcommand{\ee}{\end{equation}}
\title{
Is breaking of ensemble equivalence monotone\\ in the number of constraints?}
\author{

Andrea Roccaverde
\footnotemark[1]\,\,\,\,\footnotemark[2]
}
\date{\today}
\begin{document}

\maketitle 

\begin{abstract}
Breaking of ensemble equivalence between the microcanonical ensemble and the canonical ensemble may occur for random graphs whose size tends to infinity, and is signaled by a non-zero specific relative entropy between the two ensembles. In \cite{GHR17} and \cite{GHR18} it was shown that breaking occurs when the constraint is put on the degree sequence (configuration model). It is not known what is the effect on the relative entropy when the number of constraints is reduced, i.e., when only part of the nodes are constrained in their degree (and the remaining nodes are left unconstrained). Intuitively, the relative entropy is expected to decrease. However, this is not a trivial issue because when constraints are removed both the microcanonical ensemble and the canonical ensemble change. In this paper a formula for the relative entropy valid for generic discrete random structures, recently formulated by Squartini and Garlaschelli, is used to prove that the relative entropy is monotone in the number of constraints when the constraint is on the degrees of the nodes. It is further shown that the expression for the relative entropy corresponds, in the dense regime, to the degrees in the microcanonical ensemble being asymptotically \emph{multivariate Dirac} and in the canonical ensemble being asymptotically \emph{Gaussian}.

\medskip\noindent
{\it MSC 2010.} 
60C05, 
60K35, 
82B20. 

\medskip\noindent
{\it Key words and phrases.} Random graph, ensemble, relative entropy, equivalence vs.\ nonequivalence, covariance matrix, monotonicity.

\medskip\noindent
{\it Acknowledgment.} 
I thank Diego Garlaschelli and Frank den Hollander for the many discussions we have shared on breaking of ensemble equivalence.
\\
The research in this paper is supported by EU-project 317532-MULTIPLEX and by NWO 
Gravitation Grant 024.002.003--NETWORKS.

\end{abstract}

\newpage


\section{Introduction and main results}
\label{S1}

\subsection{Background}
\label{S1.1}

For most real-world networks, a detailed knowledge of the architecture of the network is not available and one must work with a probabilistic description, where the network 
is assumed to be a random sample drawn from a set of allowed configurations that are consistent with a set of known \emph{topological constraints}~\cite{SMG15}. Statistical physics deals with the definition of the appropriate probability distribution over the set 
of configurations and with the calculation of the resulting properties of the system. Two key choices of probability distribution are: 
\begin{itemize}
\item[(1)] 
the \emph{microcanonical ensemble}, where the constraints are \emph{hard} (i.e., are 
satisfied by each individual configuration); 
\item[(2)] 
the \emph{canonical ensemble}, where the constraints are \emph{soft} (i.e., hold as 
ensemble averages, while individual configurations may violate the constraints).
\end{itemize} 
(In both ensembles, the entropy is \emph{maximal} subject to the given constraints.)
   
Breaking of ensemble equivalence means that different choices of the ensemble lead to asymptotically different behaviors. Consequently, while for applications based on ensemble-equivalent models the choice of ensemble can be based on mathematical convenience, for those based on ensemble-nonequivalent models the choice should be determined by the system one wants to apply to, i.e., dictated by a theoretical criterion that indicates \emph{a priori} which ensemble is the appropriate one to be used. It is known that ensemble equivalence may be broken, signaled by a non-zero specific relative entropy between the two ensembles. It is expected that when the number of constraints grows extensively in the number of nodes, then typically there is breaking of ensemble equivalence. This has been shown to be the case when the setting is simple or bipartite graphs and the constraint is on the number of links (1 constraint and ensemble equivalence) or on the full degree sequence ($n$ constraints and non-equivalence) \cite{GHR17}. Later, in \cite{GHR18} and \cite{dHMRS17}, also the dense regime was investigated and it was shown that the relative entropy between the two ensembles grows even faster. 
In general, the constraint is a multidimensional vector and its components represent the single quantities that are constrained. From now on, with the word `constraint' we mean the `vector constraint' and with the plural `constraints' we mean the `components' of the vector. This means when we talk about the number of constraints we actually mean the dimension of the vector constraint. In some cases this number can be very large, for example, when the constraint is on the degree sequence (a large number of nodes which need all to have the right degree).

Once the constraint becomes a function of the number $n$ of nodes (for example, the degree sequence), we can ask an interesting question: How is the relative entropy affected when the number of constraints is reduced, possibly in a way that depends on $n$? Intuitively, the relative entropy should decrease, but this is not a trivial issue because both the microcanonical and the canonical ensemble change when the constraints are changed. Of particular interest for the present paper is the main result of \cite{GHR18}. There it was proven that, when a $\delta$-tame degree sequence is put as a constraint on the set of simple graphs, than the relative entropy between the two ensembles grows as $n\log n$. We consider random graphs with a prescribed partial degree sequence (reduced constraint). The breaking of ensemble equivalence is studied by analyzing how the relative entropy changes as a function of the number of constraints, in particular, it is shown that the relative entropy is a monotone function of the number of constraints. More precisely, when only $m$ nodes are constrained and the remaining $n-m$ nodes are left unconstrained, the relative entropy is shown to grow like $m\log n$. Our analysis is based on a recent formula put forward by Squartini and Garlaschelli~\cite{GS}. This formula predicts that the relative entropy is determined by the covariance matrix of the constraints under the canonical ensemble, in the regime where the graph is dense. Our result implies that ensemble equivalence breaks down whenever the regime is $\delta$-tame, irrespective of the number of degrees $m$ that are constrained, provided $m$ is not of order $n$. 

\subsection*{Outline}
Our paper is organized as follows. In Section \ref{S1} the background, the model and the main theorem are discussed. In Section \ref{S2} the main theorem is proved, together with a few basic lemmas that are needed along the way. Appendix \ref{appA} derives an expression for the canonical ensemble when a partial degree sequence is put as constraint. Appendix \ref{appB} discusses the $\delta$-tame condition for a partial degree sequence. 

The remainder of Section \ref{S1} is organized as follows. In Section \ref{S1.1} we discussed the background of the problem. In Section~\ref{S1.2} we define the two ensembles and their relative entropy. In Section \ref{S1.3} we describe the model when the constraint is put on the full degree sequence, in Section~\ref{S1.4} when the constraint is put on the partial degree sequence. Here we also define the $\delta$-tame regime when the constraint is on the partial degree sequence. In Section~\ref{S1.5} we state a formula for the relative entropy presented in \cite{GS} and state the main theorem. In section \ref{S1.6} we interpret the main theorem by stating how the degrees are distributed in the two ensembles. 

\subsection{Microcanonical ensemble, canonical ensemble, relative entropy}
\label{S1.2}

This section defines the two ensembles and their relative entropy when the configuration space is the set of simple graphs with $n$ nodes and the constraint is general. The content of this section is borrowed from \cite[Section 1.2]{GHR18}. For $n \in \N$, let $\cG_n$ denote the set of all simple undirected graphs with $n$ nodes.
Any graph $G\in\cG_n$ can be represented as an $n \times n$ matrix with elements 
\be
g_{ij}(G) =
\begin{cases}
1,\qquad \mbox{if there is a link between node\ } i \mbox{\ and node\ } j,\\ 
0,\qquad \mbox{otherwise.}
\end{cases}
\ee
Let $\vC$ denote a vector-valued function on $\cG_n$. Given a specific value $\vC^*$, 
which we assume to be \emph{graphical}, i.e., realizable by at least one graph in $\cG_n$, 
the \emph{microcanonical probability distribution} on $\cG_n$ with \emph{hard constraint} 
$\vC^*$ is defined as
\begin{equation}
\Pmic(G) =
\left\{
\begin{array}{ll} 
\Omega_{\vC^*}^{-1}, \quad & \text{if } \vC(G) = \vC^*, \\ 
0, & \text{else},
\end{array}
\right.
\label{eq:PM}
\end{equation}
where 
\begin{equation}
\Omega_{\vC^*} = | \{G \in \cG_n\colon\, \vC(G) = \vC^* \} |
\end{equation}
is the number of graphs that realise $\vC^*$. The \emph{canonical probability distribution} 
$\Pcan(G)$ on $\cG_n$ is defined as the solution of the maximisation of the 
\emph{entropy} 
\begin{equation}
S_n(\Pcan) = - \sum_{G \in \cG_n} \Pcan(G) \ln \Pcan(G)
\end{equation}
subject to the normalisation condition $\sum_{G \in \cG_n} \Pcan(G) = 1$ and to the 
\emph{soft constraint} $\langle \vC \rangle  = \vC^*$, where $\langle \cdot \rangle$ 
denotes the average w.r.t.\ $\Pcan$. This gives
\begin{equation}
\Pcan(G) = \frac{\exp[-H(G,\vt^*)]}{Z(\vt^*)},
\label{eq:PC}
\end{equation}
where 
\begin{equation}
H(G, \vt\,) = \vt \cdot \vC(G)
\label{eq:H}
\end{equation}
is the \emph{Hamiltonian} and
\be
Z(\vt\,) = \sum_{G \in \cG_n} \exp[-H(G, \vt\,)]
\ee
is the \emph{partition function}. In \eqref{eq:PC} the parameter $\vt$ must be set equal to 
the particular value $\vt^*$ that realises $\langle \vC \rangle  = \vC^*$. This value is unique
and maximises the likelihood of the model given the data (see \cite{GL08}).

The \emph{relative entropy} of $\Pmic$ w.r.t.\ $\Pcan$ is~\cite{T15}
\begin{equation}
S_n(\Pmic \mid \Pcan) 
= \sum_{G \in \cG_n} \Pmic(G) \log \frac{\Pmic(G)}{\Pcan(G)},
\label{eq:KL1}
\end{equation}
and the \emph{relative entropy $\alpha_n$-density} is~\cite{GS} 
\begin{equation}
s_{\alpha_n} = {\alpha_n}^{-1}\,S_n(\Pmic \mid \Pcan),
\label{eq:sn}
\end{equation}
where $\alpha_n$ is a \emph{scale parameter}. The limit of the relative entropy 
$\alpha_n$-density is defined as
\begin{equation}
s_{\alpha_\infty}\equiv\lim_{n \to \infty}s_{\alpha_n} 
= \lim_{n \to \infty} {\alpha_n}^{-1}\,S_n(\Pmic \mid \Pcan) \in [0,\infty],
\label{eq:criterion1}
\end{equation}
We say that the microcanonical and the canonical ensemble are equivalent on scale $\alpha_n$ 
if and only if \footnote{As shown in \cite{T15} within the context of interacting particle systems, 
relative entropy is the most sensitive tool to monitor breaking of ensemble equivalence (referred 
to as breaking \emph{in the measure sense}). Other tools are interesting as well, depending on 
the `observable' of interest~\cite{T18}.}    
\be
s_{\alpha_\infty} = 0.
\label{salphao}
\ee
We recall that here with `constraint' we mean the `vector constraint' and with `constraints' we mean the `components' of the vector.
The choice of $\alpha_n$ is flexible. The natural choice is the one for which $s_{\alpha_\infty} 
\in (0,\infty)$, and depends on the constraint at hand as well as its value. For instance, if the 
constraint is on the \emph{degree sequence}, then in the sparse regime the natural scale turns 
out to be $\alpha_n=n$ \cite{SdMdHG15}, \cite{GHR17} (in which case $s_{\alpha_\infty}$ is 
the specific relative entropy `per vertex'), while in the dense regime it turns out to be $\alpha_n 
= n\log n$ \cite{GHR18}. On the other hand, if the constraint is on the \emph{total numbers 
of edges and triangles}, with values different from what is typical for the Erd\H{o}s-Renyi random 
graph in the dense regime, then the natural scale turns out to be $\alpha_n=n^2$ \cite{dHMRS17} 
(in which case $s_{\alpha_\infty}$ is the specific relative entropy `per edge'). Such a severe 
breaking of ensemble equivalence comes from `frustration' in the constraints.  

Before considering specific cases, we recall an important observation made in \cite{SdMdHG15}. 
The definition of $H(G,\vt\,)$ ensures that, for any $G_1,G_2\in\cG_n$, $\Pcan(G_1)=\Pcan(G_2)$ 
whenever $\vC(G_1)=\vC(G_2)$ (i.e., the canonical probability is the same for all graphs having 
the same value of the constraint). We may therefore rewrite \eqref{eq:KL1} as
\begin{equation}
S_n(\Pmic \mid \Pcan) = \log \frac{\Pmic(G^*)}{\Pcan(G^*)},
\label{eq:KL2}
\end{equation}
where $G^*$ is \emph{any} graph in $\cG_n$ such that $\vC(G^*) =\vC^*$ (recall that we 
have assumed that $\vC^*$ is realisable by at least one graph in $\cG_n$). The definition 
in~\eqref{eq:criterion1} then becomes  
\begin{equation}
s_{\alpha_\infty}=\lim_{n \to \infty} {\alpha_n}^{-1}\, \big[\log{\Pmic(G^*)} - \log{\Pcan(G^*)} \big],
\label{eq:criterion3}
\end{equation}
which shows that breaking of ensemble equivalence coincides with $\Pmic(G^*)$ and $\Pcan(G^*)$ 
having different large deviation behaviour on scale $\alpha_n$. Note that \eqref{eq:criterion3} 
involves the microcanonical and canonical probabilities of a \emph{single} configuration $G^*$ 
realising the hard constraint. Apart from its theoretical importance, this fact greatly simplifies computations. 
To analyse breaking of ensemble equivalence, ideally we would like to be able to identify an 
underlying \emph{large deviation principle} on a natural scale $\alpha_n$. This is generally 
difficult, and so far has only been achieved in the dense regime with the help of \emph{graphons} 
(see \cite{dHMRS17} and references therein). In the present paper we will approach 
the problem from a different angle, namely, by looking at the \emph{covariance matrix of the 
constraints} in the canonical ensemble, as proposed in \cite{GS}.      

Note that all the quantities introduced above in principle depend on $n$. However, except for 
the symbols $\cG_n$ and $S_n(\Pmic \mid \Pcan)$, we suppress the $n$-dependence from 
the notation.
 
\subsection{Constraint on the full degree sequence}
\label{S1.3}
The model of this section comes from \cite{GHR17} and \cite{GHR18}. The full degree sequence of a graph $G\in \cG_n$ is defined as the vector $\vec{k}(G) = (k_i(G))_{i=1}^n$ with $k_i(G)=\sum_{j \neq i}g_{ij}(G)$. The degree sequence is set to a specific value $\vec{k}^*$, which we  assume to be graphical, i.e., there is at least one graph with degree sequence $\vec{k}^*$.
The constraint is therefore
\be 
\vC^* = \vec{k}^*= (k_i^*)_{i=1}^n \in \{1,2,\dots ,n-2\}^n.
\ee 
This constraint was studied in various regimes: in \cite{GHR17} in the sparse regime, and in \cite{GHR18} in the ultra-dense and the $\delta$-tame regime. The microcanonical ensemble, when the constraint is put on the degree sequence, is known as the \emph{configuration model} and has been studied in detail (see \cite{SMG15,SdMdHG15,vdH17}). In the sparse (and in the ultra-dense) regime, the microcanonical ensemble cannot be computed exactly, but there are good approximations with an error that is vanishing when the relative entropy is computed in the limit as $n\to\infty$ \cite{GHR17}, \cite{BH}. In the $\delta$-tame regime, this approximation does not hold, but the relative entropy can still be investigated with other tools \cite{GHR18}. The canonical ensemble can be computed in every regime and takes the form
\begin{equation}
\label{canonicalO}
\Pcan(G) = \prod_{1 \leq i<j \leq n}\left( p_{ij}^* \right)^{g_{ij}(G)} \left( 1- p_{ij}^* \right)^{1-g_{ij}(G)},
\end{equation}
with 
\be
\label{canonical1O} 
p_{ij}^* = \frac{e^{-\theta_i^*-\theta_j^*}}{1 + e^{-\theta_i^*-\theta_j^*}},
\ee 
and with the vector of Lagrange multipliers $\vec{\theta}^*=(\theta_i^*)_{i=1}^n$ tuned such that 
\begin{equation}
\label{canonical2O}
\langle k_i \rangle = \sum_{1\le j\le n\atop j \neq i}p_{ij}^* = k_i^*, \qquad \forall\ 1\le i\le n.
\end{equation}
The results in \cite{GHR17} show that there is breaking of ensemble equivalence with $\alpha_n=n$ when the regime is sparse and ultra-dense. The results in \cite{GHR18} show that the relative entropy grows like $\alpha_n = \tfrac{1}{2}n\log n$. The purpose of the paper is to investigate what happens when part of the $n$ constraints degrees are removed and how the relative entropy is affected by this. In the next section the partial constraint is presented and the main theorem is stated.   

\subsection{Constraint on the partial degree sequence}
\label{S1.4}
In this section we look at a different model. The constraint is put on the partial degree sequence instead of on the full degree sequence, more precisely, only the first $m<n$ nodes are constrained while the remaining nodes are left unconstrained. The partial degree sequence  of a graph $G\in\cG_n$ is defined as the vector $\vec{k}(G) = (k_i(G))_{i=1}^m$ where $k_i(G)=\sum_{1\le j\neq j\le n}g_{ij}(G)$. The constraint is set to be a \emph{specific $m$-dimensional} vector $\vec{k}^*$, which we assume to be \emph{graphical}, i.e., there exist at 
least one graph $G^*\in\cG_n$ with partial degree sequence $\vec{k}^*$. The constraint is therefore
\be
\label{degcon}
\vC^* = \vec{k}^*= (k_i^*)_{i=1}^m \in \{1,2,\dots ,n-2\}^m,
\ee 
As mentioned above, the microcanonical ensemble can be computed approximately when the constraint is put on the full degree sequence. However, when the constraint is put on the partial degree sequence, no good approximation is available. The situation is different for the canonical ensemble, which can still be computed. Appendix \ref{appA} is dedicated to the study of the canonical ensemble when a partial degree sequence is put as a constraint. This leads to 
\begin{equation}
\label{canonical}
\Pcan(G) = 2^{-{{n-m}\choose 2}}\prod_{1 \leq i<j \leq m}\left( p_{ij}^* \right)^{g_{ij}(G)} \left( 1- p_{ij}^* \right)^{1-g_{ij}(G)}\prod_{i=1}^m\left( p_{i}^* \right)^{s_{i}(G)} \left( 1- p_{i}^* \right)^{n-m-s_{i}(G)}
\end{equation}
with 
\be
\label{canonical1} 
p_{ij}^* = \frac{e^{-\theta_i^*-\theta_j^*}}{1 + e^{-\theta_i^*-\theta_j^*}},\qquad p_{i}^* = \frac{e^{-\theta_i^*}}{1 + e^{-\theta_i^*}},\qquad s_i(G)=\sum_{j=m+1}^n g_{ij}(G),
\ee 
and with the vector of Lagrange multipliers $\vec{\theta}^*=(\theta_i^*)_{i=1}^m$ tuned such that 
\begin{equation}
\label{canonical2}
\langle k_i \rangle = \sum_{1\le j\le m\atop j \neq i}p_{ij}^* + (n-m)p_{i}^* = k_i^*, \qquad 1\le i\le m.
\end{equation}

The canonical ensemble has an interesting dual structure, consisting of the product of two canonical probabilities, which we call unipartite probability and bipartite probability, and an overall factor $2^{-{{n-m}\choose 2}}$. The unipartite probability, 
$$\prod_{1 \leq i<j \leq m}\left( p_{ij}^* \right)^{g_{ij}(G)} \left( 1- p_{ij}^* \right)^{1-g_{ij}(G)},$$
is precisely the canonical ensemble obtained when the constraint is put on the full degree sequence $\vec{u}^*= (u_i^*)_{i=1}^m$, with $u_i^* = \sum_{1\le j\le m\atop j \neq i}p_{ij}^*$, on the subset of graphs with $m$ nodes $\cG_m$. The bipartite probability,
$$\prod_{i=1}^m\left( p_{i}^* \right)^{s_{i}(G)} \left( 1- p_{i}^* \right)^{n-m-s_{i}(G)},$$
is precisely the canonical bipartite probability obtained when the constraint is put on the top layer of a bipartite graph. More precisely, the configuration space is the set of bipartite graphs $\cG_{m,n-m}$ with $m$ nodes in the top layer and $n-m$ nodes in the bottom layer. The constraint is put on the degree sequence in the top layer only and corresponds to the vector $\vec{b}^*= (b_i^*)_{i=1}^m$ with $b_i^* = (n-m)p_i^*$. Moreover, the average $i$-th degree $\langle k_i \rangle$ with respect to the canonical ensemble \eqref{canonical} equals $k_i^*$ and is given by the balance equation \eqref{canonical2}. This equation shows that the $i$-th unipartite constraint $u_i^*$ and the $i$-th bipartite constraint $b_i^*$ sum up to the $i$-th original constraint $k_i^*$. 

\begin{definition}
\label{delta}
A partial degree sequence $\vec{k}^*= (k_i^*)_{i=1}^m$, put as a constraint on the set of configurations $\cG_n$ with $m<n$, is said to be $\delta$-tame if and only if there exists 
a $\delta\in \left(0,\frac{1}{2}\right]$ such that
\begin{equation}
\label{deltatamedef}
\delta \leq p_{ij}^* \leq 1-\delta, \qquad 1\leq i \neq j \leq m,
\end{equation}
where $p_{ij}^*$ are the canonical probabilities in \eqref{canonical}--\eqref{canonical2}.
\end{definition}
It is easy to prove that, given a $\delta$-tame partial degree sequence $\vec{k}^*= (k_i^*)_{i=1}^m$, the bipartite probabilities $(p_{i}^*)_{i=1}^m$ are also $\delta$-tame, namely, satisfy
\be
\delta' \leq p_{i}^* \leq 1-\delta', \qquad \forall\ 1\leq i \leq m,
\ee
for some $\delta'\in \left(0,\frac{1}{2}\right]$. This is discussed in more detail in Appendix~\ref{appB}. Condition \eqref{deltatamedef} has a trivial implication for the degree sequence:
\begin{equation}
\label{deltatameondegrees1}
(m-1)\delta \leq u_i^* \leq (m-1)(1-\delta), \qquad 1\leq i \leq m,
\end{equation}
\begin{equation}
\label{deltatameondegrees2}
(n-m)\delta' \leq b_i^* \leq (n-m)(1-\delta'), \qquad 1\leq i \leq m.
\end{equation}
Since $\delta' = \frac{1}{1+(\frac{1-\delta}{\delta})^{3/2}} < \delta$ for all $\delta \in [0,1/2)$ and $u_i^* + b_i^* = k_i^*$, it follows that
\begin{equation}
\label{deltatameondegrees3}
(n-1)\delta'\leq k_i^* \leq (n-1)(1-\delta'), \qquad 1\leq i \leq m.
\end{equation}
This means that $\delta$-tame graphs are neither too thin (sparse regime) nor too dense (ultra-dense regime). 
It is natural to ask whether, conversely, condition \eqref{deltatameondegrees3}, or a similar condition involving only the original degrees $\vec{k}^*=(k_i^*)_{i=1}^m$, is sufficient to prove that the partial degree sequence is $\delta$-tame for some $\delta=\delta(\delta')$, in the sense of Definition \ref{delta}. Unfortunately, this question is not easy
to settle, but the following lemma provides a partial answer. 
\begin{lemma}
\label{reverse}
Suppose that $\vec{k}^*= (k_i^*)_{i=1}^m$ satisfies
\begin{equation}
\label{invdeltatame}
(n-1)\delta' + (n-m) \leq k_i^* \leq (n-1)(1-\delta'), \qquad 1\leq i \leq m,
\end{equation} 
for some $\delta' \in (\tfrac14,\tfrac12]$. Then there exist $\delta = \delta(\delta')>0$ and $n_0=n_0(\delta') 
\in \N$ such that $\vec{k}^*= (k_i^*)_{i=1}^m$ is a $\delta$-tame partial degree sequence, in the sense of Definition \ref{delta}, for all $n \geq n_0$.
\end{lemma}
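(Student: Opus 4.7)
The goal is to exhibit constants $\delta=\delta(\delta')>0$ and $n_0=n_0(\delta')\in\N$ such that $p_{ij}^*\in[\delta,1-\delta]$ for all $1\le i\ne j\le m$ and all $n\ge n_0$. Writing $y_i:=e^{-\theta_i^*}$, so that $p_{ij}^*=y_iy_j/(1+y_iy_j)$, it suffices to produce uniform positive constants $A(\delta')\le B(\delta')$ with $A(\delta')\le y_i\le B(\delta')$ for every $i\in\{1,\dots,m\}$; one then sets $\delta(\delta'):=\min\{A^2/(1+A^2),\,1/(1+B^2)\}>0$.

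\emph{Step 1 (two-sided product bound).} Let $y_\mu:=\min_i y_i$ and $y_M:=\max_i y_i$. Rewrite \eqref{canonical2} in the two equivalent forms
\[
\sum_{j\ne i,\,j\le m}p_{ij}^*\;=\;k_i^*-(n-m)\tfrac{y_i}{1+y_i},\qquad
\sum_{j\ne i,\,j\le m}\tfrac{1}{1+y_iy_j}\;=\;(n-1)-k_i^*-\tfrac{n-m}{1+y_i}.
\]
Applying the first at $i=\mu$ with $p_{\mu j}^*\le y_\mu y_M$ and the lower bound of \eqref{invdeltatame} gives $y_\mu y_M\ge(n-1)\delta'/(m-1)\ge\delta'$. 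Applying the second at $i=M$ with $1/(1+y_My_j)\le 1/(1+y_My_\mu)$ and the upper bound of \eqref{invdeltatame} gives $y_\mu y_M\le(m-1)/[(n-1)\delta']-1+O\bigl((n-m)/(1+y_M)\bigr)$, which in the nontrivial regime $y_M\to\infty$ reduces to $y_\mu y_M\le(1-\delta')/\delta'+o(1)$; the complementary alternative in which $y_M$ stays bounded is trivial. Hence $y_\mu y_M$ is pinned in a compact subinterval of $(0,\infty)$ depending only on $\delta'$.

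\emph{Step 2 (individual bounds).} Argue by contradiction: suppose $y_M^{(n_k)}\to\infty$ along a subsequence $n_k\to\infty$; by Step 1, $y_\mu^{(n_k)}\to 0$. Substitute the sharper estimate $p_{\mu j}^*\le y_\mu y_M/(1+y_\mu y_M)\le 1-(n-1)\delta'/(m-1)$ (from the upper bound of Step 1) into the first identity at $i=\mu$, use $(n-m)y_\mu/(1+y_\mu)=o(1)$, and rearrange to
\[
(n-1)\delta'+(n-m)\;\le\;(m-1)-(n-1)\delta'+o(1),\qquad\text{i.e.}\qquad m\;\ge\;\tfrac{n+1}{2}+(n-1)\delta'-o(1).
\]
The gap between this and the non-emptiness bound $m\ge 1+2(n-1)\delta'$ (coming from $(n-m)\le(n-1)(1-2\delta')$ in \eqref{invdeltatame}) equals $(n-1)(\tfrac12-\delta')$, strictly positive for $\delta'<\tfrac12$. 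If the subsequence contains infinitely many $n_k$ with $m(n_k)<\tfrac{n_k+1}{2}+(n_k-1)\delta'$, extracting a sub-subsequence yields an immediate contradiction. In the complementary sub-range $m\ge\tfrac{n+1}{2}+(n-1)\delta'$ one has $n-m\le(n-1)(\tfrac12-\delta')$, so the bipartite contribution $(n-m)p_i^*$ is subdominant and the argument closes by reducing to the full-degree-sequence $\delta$-tame analysis of \cite{GHR18} applied to the canonical ensemble embedded into a full graph on $n$ vertices with $\theta_i=0$ for $i>m$. Hence $y_\mu\ge A(\delta')>0$ uniformly, and the product bound of Step 1 then gives $y_M\le(1-\delta')/[\delta'A(\delta')]=:B(\delta')$.

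\emph{Main obstacle.} The principal technical difficulty lies in Step 2: the crude estimate $p_{\mu j}^*\le 1$ is insufficient to rule out $y_\mu\to 0$ across the full admissible range $m\in[1+2(n-1)\delta',\,n]$, and one must deploy the sharpened bootstrap $p_{\mu j}^*\le y_\mu y_M/(1+y_\mu y_M)$ from Step 1 on the initial sub-range and a reduction to the full-degree case on the complementary sub-range. The quantitative hypothesis $\delta'>\tfrac14$ enters precisely to guarantee that the bootstrapped lower bound on $m$ strictly exceeds the non-emptiness lower bound, producing the positive margin $(n-1)(\tfrac12-\delta')$ that powers the contradiction.
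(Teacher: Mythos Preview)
Your bootstrap program has a genuine gap in Step~2. When you substitute the sharpened estimate $p_{\mu j}^*\le 1-(n-1)\delta'/(m-1)$ into the balance identity at $i=\mu$, the residual terms are $(n-m)\,y_\mu/(1+y_\mu)$ and $(n-m)/(1+y_M)$. These are $o(n-m)$, not $o(1)$: in the very sub-range you isolate, namely $m<\tfrac{n+1}{2}+(n-1)\delta'$, one has $n-m>(n-1)(1-2\delta')/2$, which is of order $n$. The inequality you actually obtain is therefore $m\ge\tfrac{n+1}{2}+(n-1)\delta'+o(n)$, which is vacuous and does not contradict $m<\tfrac{n+1}{2}+(n-1)\delta'$. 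Moreover, your identification of the role of $\delta'>\tfrac14$ is inconsistent: you first note (correctly) that the margin $(n-1)(\tfrac12-\delta')$ is positive for $\delta'<\tfrac12$, and then claim that $\delta'>\tfrac14$ is what makes it positive. Neither statement explains why the hypothesis $\delta'>\tfrac14$ is needed. Finally, the complementary sub-range is disposed of only by an unspecified ``reduction to the full-degree-sequence $\delta$-tame analysis of~\cite{GHR18}'' on an artificially extended graph with $\theta_i=0$ for $i>m$; this is not justified and effectively imports the very external machinery your direct argument was meant to replace.

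The paper proceeds entirely differently and much more briefly. It does not bootstrap the multipliers at all. Instead it notes that the unipartite degrees $u_i^*=k_i^*-b_i^*$ inherit from~\eqref{invdeltatame} and the trivial bound $b_i^*\in[0,n-m]$ the two-sided estimate $(n-1)\delta'\le u_i^*\le(n-1)(1-\delta')$. Since the $(p_{ij}^*)_{1\le i\ne j\le m}$ are exactly the maximum-entropy edge probabilities for the degree sequence $(u_i^*)_{i=1}^m$ on $m$ nodes, one applies \cite[Theorem~2.1]{BH} with $\alpha=\delta'$, $\beta=1-\delta'$ to conclude $\delta$-tameness directly. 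The restriction $\delta'>\tfrac14$ is precisely the hypothesis of that theorem, not an artifact of any internal bootstrap.
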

\begin{proof}
Condition \eqref{invdeltatame}, with $u_i^* = k_i^*-b_i^*$ and $b_i^* \in [0,n-m]$, gives
\begin{equation}
\label{invdeltatameold}
(n-1)\delta' \leq u_i^* \leq (n-1)(1-\delta'), \qquad 1\leq i \leq m.
\end{equation}
The proof follows from \eqref{invdeltatameold} and \cite[Theorem 2.1]{BH}. In fact, applying that theorem with $\alpha=\delta'$, $\beta=1-\delta'$ and with $\delta'>\tfrac14$, we get
\begin{equation}
\label{deltatameunip}
\delta \leq p_{ij}^* \leq 1-\delta, \qquad 1\leq i \neq j \leq m.
\end{equation}
Moreover, \cite[Theorem 2.1]{BH} also gives information about the values of 
$\delta = \delta(\delta')$ and $n_0=n_0(\delta')$. 
\end{proof}


\subsection{Linking ensemble nonequivalence to the canonical covariances}
\label{S1.5}

In this section we describe an important formula, recently put forward in~\cite{GS}, for the scaling 
of the relative entropy under a general constraint. The analysis in \cite{GS} allows for the possibility 
that not all the constraints (i.e., not all the components of the vector $\vec{C}$) are linearly independent. 
For instance, $\vec{C}$ may contain redundant replicas of the same constraint(s), or linear combinations 
of them. Since in the present paper we only consider the case where $\vec{C}$ is the degree sequence, 
the different components of $\vec{C}$ (i.e., the different degrees) are linearly independent.

When a $K$-dimensional constraint $\vec{C}^* = (C^*_i)_{i=1}^K$ with independent components is 
imposed, then a key result in~\cite{GS} is the formula
\be
S_n(\Pmic \mid \Pcan) \sim \log\frac{\sqrt{\det(2\pi Q)}}{T}, \qquad n\to\infty,
\ee
where
\begin{equation}
\label{covQ}
Q=(q_{ij})_{1 \leq i,j \leq K}
\end{equation}
is the $K\times K$ covariance matrix of the constraints under the canonical ensemble, whose entries are 
defined as
\be
\qquad q_{ij} = \mathrm{Cov}_{\Pcan}(C_i,C_j)=\langle C_i\,C_j\rangle-\langle C_i\rangle \langle C_j\rangle,
\ee
and
\be
T=\prod_{i=1}^K\left[1+O\left(1/\lambda_i^{(K)}(Q)\right)\right], 
\label{eq:T}
\ee
with $\lambda_i^{(K)}(Q)>0$ the $i$-th eigenvalue of the $K\times K$ covariance matrix $Q$. This result can 
be formulated more rigorously as follows.
\begin{formula}[\cite{GS}]
\label{conj}
If all the constraints are linearly independent, then the limiting relative entropy ${\alpha_n}$-density 
equals
\be
s_{\alpha_\infty}=\lim_{n\to\infty}\frac{\log\sqrt{\det(2\pi Q)}}{\alpha_n}+\tau_{\alpha_\infty}
\label{eq:deltalimit}
\ee
with
\be
\tau_{\alpha_\infty}=-\lim_{n\to\infty}\frac{\log T}{\alpha_n}.
\label{eq:tau}
\ee
The latter is zero when 
\begin{equation}
\label{CondforFormula}
\lim_{n\to\infty} \frac{|I_{K_n,R}|}{\alpha_n}=0\quad \forall\,R<\infty,
\end{equation}
where $I_{K,R} = \lbrace i=1,\dots,K\colon\,\lambda_i^{(K)}(Q) \le R \rbrace$ with $\lambda_i^{(K)}(Q)$
the $i$-th eigenvalue of the $K$-dimensional covariance matrix $Q$ (the notation $K_n$ indicates that
$K$ may depend on $n$). Note that $0\le I_{K,R} \le K$. Consequently, \eqref{CondforFormula} is satisfied 
(and hence $\tau_{\alpha_\infty}=0$) when $\lim_{n\to\infty} K_n/\alpha_n=0$, i.e., when the number $K_n$ 
of constraints grows slower than $\alpha_n$.
\end{formula}

\begin{remark}[\cite{GS}]
{\rm Formula~{\rm \ref{conj}}, for which \cite{GS} offers compelling evidence but not a mathematical 
proof, can be rephrased by saying that the natural choice of $\alpha_n$
is
\be
\tilde{\alpha}_n=\log\sqrt{\det(2\pi Q)}.
\label{eq:alphatilde}
\ee
Indeed, if all the constraints are linearly independent and \eqref{CondforFormula} holds, 
then $\tau_{\tilde{\alpha}_n}=0$ and 
\begin{eqnarray}
\label{eq:deltalimit2}
&s_{\tilde{\alpha}_\infty}=1,\\
&S_n(\Pmic \mid \Pcan)=[1+o(1)]\,\tilde{\alpha}_n.
\label{conjecture2}
\end{eqnarray}
}
\end{remark}

Formula \ref{conj} has been verified in several examples, namely, all the models in \cite{GHR17} and \cite{GHR18}.

Next we present our main theorem, which considers the case where the constraint is on the partial degree sequence $\vC^*=\vec{k}^*= (k_i^*)_{i=1}^m$ in the \emph{$\delta$-tame regime} defined in Definition \ref{delta}.

\begin{theorem}
\label{MainTheorem}
Suppose that:
\begin{itemize}
\item
The constraint is put on the partial degree sequence $\vC^*= \vec{k}^*= (k_i^*)_{i=1}^m$ on the space of simple graphs $\cG_n$ with $0\le m \le n$.
\item
$\vC^*= \vec{k}^*= (k_i^*)_{i=1}^m$ is a $\delta$-tame partial degree sequence, namely, the canonical probabilities $(p_{ij}^*)_{1\le i \neq j\le m}$ satisfy
\be 
\delta \leq p_{ij}^* \leq 1-\delta, \quad 1 \leq i\neq j \leq m.
\ee
\item
Formula \ref{conj} is valid in the above framework.
\item
The scale parameter is $\alpha_n = \dfrac{m\log n}{2}$.
\item
$m=m(n)$ satisfies 
\be 
\label{Condmn}
\lim_{n\to \infty}\dfrac{n-m}{m}\log n = \infty.
\ee
\end{itemize}
Then there is breaking of ensemble equivalence, and  
\begin{equation}
\label{Result}
s_{{\alpha}_\infty}= \lim_{n \to \infty} s_{\alpha_n} = 1.
\end{equation}
\end{theorem}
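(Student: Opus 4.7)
The plan is to apply Formula~\ref{conj} to the $m\times m$ covariance matrix $Q$ of the partial degree sequence $(k_i)_{i=1}^m$ under the canonical law $\Pcan$ in \eqref{canonical}. The $m$ degrees are linearly independent as functions on $\cG_n$, so Formula~\ref{conj} applies. Since $\Pcan$ factorises over independent Bernoulli edge variables, a direct computation gives, for $1\le i\ne j\le m$,
\[
q_{ij}=p_{ij}^*(1-p_{ij}^*),\qquad q_{ii}=U_i+B_i,
\]
where $U_i=\sum_{1\le \ell\le m,\,\ell\ne i}p_{i\ell}^*(1-p_{i\ell}^*)$ collects the unipartite contribution and $B_i=(n-m)\,p_i^*(1-p_i^*)$ the bipartite one.

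The second step is to bound $\log\det Q$ from both sides. Split $Q=A+B$ with $A=\mathrm{diag}(B_1,\ldots,B_m)$ and $B$ the covariance matrix of the unipartite Erd\H{o}s--R\'enyi degrees with probabilities $(p_{ij}^*)$, which is positive semidefinite. Since the bipartite probabilities $p_i^*$ are $\delta'$-tame by the observation following Definition~\ref{delta}, we have $B_i\ge (n-m)\delta'(1-\delta')>0$. The L\"owner ordering $Q\succeq A\succ 0$ then yields $\det Q\ge\det A=\prod_i B_i$, while Hadamard's inequality together with $q_{ii}\le (n-1)/4$ gives $\det Q\le ((n-1)/4)^m$, so
\[
m\log\!\bigl((n-m)\delta'(1-\delta')\bigr)\;\le\;\log\det Q\;\le\;m\log\!\bigl((n-1)/4\bigr).
\]

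The third step is the elementary observation that hypothesis \eqref{Condmn} forces $\log(n-m)=(1+o(1))\log n$: if $m\le n/2$ this is immediate from $n-m\ge n/2$, and if $m>n/2$ then \eqref{Condmn} forces $n-m\gg n/\log n$, whence $\log(n-m)\ge \log n-\log\log n+\omega(1)$. Combined with the two-sided bound above, and absorbing the $O(m)$ terms $m\log(\delta'(1-\delta'))$, $m\log 4$ and $\tfrac{m}{2}\log(2\pi)$ into $o(m\log n)$, this produces
\[
\tfrac12\log\det(2\pi Q)=\tfrac{m\log n}{2}\,(1+o(1))=\alpha_n\,(1+o(1)).
\]
For the correction $\tau_{\alpha_\infty}$, the same L\"owner comparison gives $\lambda_i^{(m)}(Q)\ge\min_j B_j\ge (n-m)\delta'(1-\delta')\to\infty$ under \eqref{Condmn}, so $|I_{m,R}|=0$ for every fixed $R$ and all sufficiently large $n$, whence $\tau_{\alpha_\infty}=0$. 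Formula~\ref{conj} then delivers $s_{\alpha_\infty}=1$.

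The main obstacle I anticipate is producing the tight lower bound on $\det Q$ uniformly across the admissible range of $m(n)$: a naive perturbative expansion of $\log\det(I+D^{-1/2}(Q-D)D^{-1/2})$ with $D=\mathrm{diag}(q_{ii})$ becomes unreliable once $m/(n-m)$ fails to vanish, which is exactly the regime where \eqref{Condmn} still allows nontrivial behaviour. The L\"owner comparison $Q\succeq A$ sidesteps this by reducing the lower bound to the trivial determinant of a diagonal matrix, so that the off-diagonal entries of $Q$ never need to be treated perturbatively.
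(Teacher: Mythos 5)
Your proposal is correct, and it reaches \eqref{Result} by a genuinely different route than the paper for the key determinant estimate. The paper (Lemma \ref{lemma3}) compares $\det Q$ with $\det Q_D$, $Q_D=\mathrm{diag}(Q)$, via the Ipsen--Lee determinant approximation combined with a Gershgorin bound on the eigenvalues of $Q_D^{-1}Q_{\mathrm{off}}$; condition \eqref{Condmn} enters there to make the Ipsen--Lee error $m\rho^2(A)/(1+\lambda_{\min}(A))$ of order $o(m\log n)$. You instead exploit the edge-independence structure of $\Pcan$ directly: writing $Q$ as the sum of the diagonal bipartite covariance $A=\mathrm{diag}\bigl((n-m)p_i^*(1-p_i^*)\bigr)$ and the positive semidefinite unipartite degree covariance, the L\"owner ordering $Q\succeq A\succ 0$ gives the lower bound $\det Q\ge\prod_i B_i$, while Hadamard plus $q_{ii}\le (n-1)/4$ gives the upper bound, so no perturbative control of the off-diagonal part is ever needed. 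In your argument \eqref{Condmn} is used only to guarantee $\log(n-m)=(1+o(1))\log n$ and $n-m\to\infty$; the latter (which does follow from \eqref{Condmn}, since bounded $n-m$ would force $\tfrac{n-m}{m}\log n\to 0$ --- worth one explicit line) yields $\lambda_{\min}(Q)\ge(n-m)\delta'(1-\delta')\to\infty$, so $|I_{m,R}|=0$ eventually and $\tau_{\alpha_\infty}=0$; this is in fact a cleaner justification of the vanishing of the error term than the paper's brief remark. Your covariance computation from independent Bernoulli edges agrees with the paper's \eqref{Q}, obtained there via second derivatives of the log-likelihood. A pleasant by-product of your approach is that it only needs $\log(n-m)\sim\log n$ (e.g.\ $n-m=n^{\beta}$ with $\beta\in(0,1)$), which is strictly weaker than \eqref{Condmn}, so it covers part of the regime where the paper explicitly says its proof breaks down; what each approach buys is thus: the paper's diagonal approximation keeps the full asymptotics of $\log\det Q$ (useful for the Gaussian interpretation in Section \ref{S1.6}), while your L\"owner--Hadamard sandwich is more elementary and more robust in $m(n)$.
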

Condition \eqref{Condmn} fails when $n-m=O(\tfrac{m}{\log n})$, i.e., when the number of unconstrained nodes is sufficiently small. We expect that \eqref{Result} continues to hold even in this case, but our proof breaks down.

\subsection{Discussion}
\label{S1.6}

Theorem \ref{MainTheorem} analyses the relative entropy at a macroscopic level, but says nothing about what happens at the microscopic level. More precisely, it does not identify how the relative entropy changes when a single constraint is removed, rather than a positive fraction of constraints. A microscopic analysis could reveal what is the effect when e.g. the longest degree is removed, or the smallest degree, or any other degree. The result in Theorem \ref{MainTheorem} is far from trivial. In fact, when the number of constraints is reduced, it can become either easier or more difficult to compute microcanonical and canonical ensembles. The case when the constraint is put on the degree sequence provides a clear example. If the constraint is put on the full degree sequence, then the microcanonical ensemble can be \emph{asymptotically computed} \cite{BH}. As soon as one or more degrees are removed (meaning that some nodes are left unconstrained), the structure of the problem changes completely. The symmetry of the constraints is broken by the removal, and this makes it more difficult to compute the number of graphs with a prescribed partial degree sequence. On the other hand, the canonical problem can still be solved and has an interesting structure (Appendix \ref{appA}). This makes it possible to use the formula proposed by Garlaschelli and Squartini \cite{GS}, which only makes use of the canonical ensemble to analyze the relative entropy between the two ensembles. Theorem \ref{MainTheorem} clearly exhibits the monotonicity property of the relative entropy in the case where the constraint is put on the degrees. Indeed, under the hypotheses written above, the relative entropy $S_n(\Pmic \mid \Pcan)$ grows like $m\log n$, where $m$ is the number of constrained nodes and $n$ is the total number of nodes.  
This shows that the relative entropy is monotone in the number of constraints on scale $n$. 

We next provide a heuristic explanation for Theorem \ref{MainTheorem} (in analogy with what was done in \cite{GHR17} and \cite{GHR18}). 
\paragraph{Heuristic explanation of Theorem \ref{MainTheorem}.} 
Using \eqref{eq:KL2}, we can write the relative entropy between the ensembles as
\be
\label{interp}
S_n(\Pmic \mid \Pcan) = \log \frac{\Pmic(G^*)}{\Pcan(G^*)} 
= -\log [\Omega^n_{\vec{k}^*}\Pcan(G^*)]= -\log Q^n[\vec{k^*}](\vec{k^*}),
\ee
where $\Omega^n_{\vec{k}^*}$ is the number of graphs with $n$ nodes and partial degree sequence $\vec{k}^*= (k_i^*)_{i=1}^m$,
\begin{equation}
Q^n[\vec{k^*}](\vec{k}\,) = \Omega_{\vec{k}}^n\,\Pcan\big(G^{\vec{k}}\big)
\label{eq:QOmega}
\end{equation}
is the probability that the partial degree sequence is equal to $\vec{k}$ under the canonical ensemble 
with constraint $\vec{k}^*$, $G^{\vec{k}}$ denotes an arbitrary graph with partial degree sequence 
$\vec{k}$, and $\Pcan\big(G^{\vec{k}}\big)$ is the canonical probability rewritten for one such graph. Indeed, \eqref{eq:PC} shows that the canonical probability is constant for all graphs with the same constraint, in our case, for all graphs with the same partial degree sequence. Using \eqref{eq:PC} and \eqref{partition} we can rewrite the canonical probability in the form
\be 
\Pcan\big(G^{\vec{k}}\big) 
= 2^{-{{n-m}\choose 2}}\prod_{i=1}^m \frac{{x_i^*}^{k_i}}{(1+x_i^*)^{n-m}} \prod_{1\le i<j \le m}(1+x_i^* x_j^*)^{-1},
\label{px}
\ee
where $x_i^* = e^{-\theta_i^*}$, and $\vec{\theta}^*=(\theta_i^*)_{i=1}^m$ is the 
vector of Lagrange multipliers coming from \eqref{canonical1}. 
Note that~\eqref{interp} can be rewritten as
\begin{equation}
S_n(\Pmic \mid \Pcan) = S\big(\,\delta[\vec{k^*}] \mid Q^n[\vec{k^*}]\,\big),
\label{SQ}
\end{equation} 
where $\delta[\vec{k^*}] = \prod_{i=1}^m \delta[k^*_i]$ is the \emph{multivariate Dirac distribution} 
with average $\vec{k^*}$. This has the interesting interpretation that the relative entropy between 
the distributions $\Pmic$ and $\Pcan$ \emph{on the set of graphs} $\cG_n$ coincides with the relative 
entropy between $\delta[\vec{k^*}]$ and $Q^n[\vec{k^*}]$ \emph{on the set of degree sequences}.
To be explicit, using \eqref{eq:QOmega} and~\eqref{px}, we can rewrite $Q^n[\vec{k^*}](\vec{k})$ as 
\begin{equation}
Q^n[\vec{k^*}](\vec{k}) =\Omega_{\vec{k}}^n\ 2^{-{{n-m}\choose 2}}\prod_{i=1}^m \frac{{x_i^*}^{k_i}}{(1+x_i^*)^{n-m}}
\prod_{1\le i<j \le m}(1+x_i^* x_j^*)^{-1}.
\label{eq:PoissonBinomial}
\end{equation}

The above distribution is a multivariate version of the \emph{Poisson-Binomial 
distribution} \cite{W93}. In the univariate case, 
the Poisson-Binomial distribution describes the probability of a certain number of successes 
out of a total number of independent and (in general) nonidentical Bernoulli trials~\cite{W93}. 
In this case, the marginal probability that node $i$ has degree $k_i$ in the canonical ensemble, 
irrespectively of the degree of any other node, is a univariate Poisson-Binomial given by the $m-1$ independent Bernoulli trials with success probabilities $\{p_{ij}^*\}_{1\le j\ne i \le m}$ and the $n-m$ independent Bernoulli trials with the same success probability $p_{i}^*$, with a total of $n-1$ independent Bernoulli trials. 
The relation in \eqref{SQ} can therefore be restated as
\begin{equation}
S_n(\Pmic \mid \Pcan) = S\big(\,\delta[\vec{k^*}] 
\mid \mathrm{PoissonBinomial}[\vec{k^*}]\,\big), 
\end{equation}
where $\mathrm{PoissonBinomial}[\vec{k^*}]$ is the multivariate Poisson-Binomial distribution 
given by ~\eqref{eq:PoissonBinomial}, i.e.,
\be
Q^n[\vec{k^*}] = \mathrm{PoissonBinomial}[\vec{k^*}].
\ee
The relative entropy can therefore be seen as coming from a situation in which the microcanonical 
ensemble forces the degree sequence to be exactly $\vec{k^*}$, while the canonical ensemble 
forces the degree sequence to be Poisson-Binomial distributed with average $\vec{k^*}$.
\paragraph{Two different regimes for the Poisson-Binomial distribution.}
It is known that the univariate Poisson-Binomial distribution admits two asymptotic limits: (1) a 
Poisson limit (if and only if $\sum_{j\ne i}p_{ij}^*\to\lambda>0$ and $\sum_{j\ne i}
(p_{ij}^*)^2\to0$ as $n\to\infty$~\cite{W93}); (2) a Gaussian limit (if and only if $p_{ij}^*\to\lambda_j>0$ 
for all $j\ne i$ as $n\to\infty$, as follows from a central limit theorem type of argument). If all 
the Bernoulli trials are identical, i.e., if all the probabilities $\{p_{ij}^*\}_{j\ne i}$ are equal, then 
the univariate Poisson-Binomial distribution reduces to the ordinary Binomial distribution, which 
also exhibits the well-known Poisson and Gaussian limits. These results imply that also the 
general multivariate Poisson-Binomial distribution in \eqref{eq:PoissonBinomial} admits a limiting 
behavior that should be consistent with the Poisson and Gaussian limits discussed above for
its marginals. This is precisely what we argue below.
\paragraph{Gaussian constrained degrees in the $\delta$-tame regime.}
Comparing \eqref{eq:deltalimit2} and~\eqref{Result}, and using \eqref{eq:alphatilde}, we see that Theorem~\ref{MainTheorem} shows that if the constraint is on the partial degree sequence, then
\be
\label{rescaling}
S_n(\Pmic \mid \Pcan) \sim m \log n\sim\log\sqrt{\det(2\pi Q)}
\ee 
whenever the regime is $\delta$-tame and condition \eqref{Condmn} is satisfied. Equation \eqref{rescaling} can be reinterpreted as the statement
\be
\label{extra} 
S_n(\Pmic \mid \Pcan) \sim S\big(\,\delta[\vec{k^*}] \mid \mathrm{Normal}[\vec{k^*},Q]\,\big),
\ee
where $\mathrm{Normal}[\vec{k^*},Q]$ is the \emph{multivariate Normal distribution} with 
mean $\vec{k^*}$ and covariance matrix $Q$. In other words, in the $\delta$-tame regime
\be 
Q^n[\vec{k^*}] \sim \mathrm{Normal}[\vec{k^*},Q],
\ee
i.e., the multivariate Poisson-Binomial distribution~\eqref{eq:PoissonBinomial} is asymptotically 
a multivariate Gaussian distribution whose covariance matrix is in general not diagonal, i.e., the 
dependence between the degrees of the different nodes does \emph{not} vanish. Since in this regime all the degrees are growing, so do all the eigenvalues of $Q$, 
which ensures that Formula~{\rm \ref{conj}} holds with $\tau_{\alpha_\infty}=0$, as proven 
in Theorem~\ref{MainTheorem}. 

Note that the right-hand side of \eqref{extra}, being the relative entropy of a discrete distribution 
with respect to a continuous distribution, needs to be properly interpreted: the Dirac distribution
$\delta[\vec{k^*}]$ needs to be smoothed to a continuous distribution with support in a small
ball around $\vec{k^*}$. Since the degrees are large, this does not affect the asymptotics. 

\paragraph{Poisson-Binomial unconstrained degrees in the $\delta$-tame regime.}
It is interesting to study the distribution of the degrees of the unconstrained nodes in the canonical ensemble. The canonical probability of the $(m+1)$-th degree (the first unconstrained node) can be computed and the same steps can be used to compute the canonical probabilities of the other unconstrained nodes, which follow the same probability law. The canonical probability that the $(m+1)$-th node is equal to some value $x\in \left\lbrace 0,1,\dots, n-1 \right\rbrace$ can be written as:
\be
\label{canonicalm+1}
\begin{aligned}
&\sum_{\substack{G\in\cG_n \\ k_{m+1}(G)=x}} \Pcan(G)\\
&=2^{-{{n-m}\choose 2}}\sum_{\substack{G\in\cG_n \\ k_{m+1}(G)=x}} \prod_{1 \leq i<j \leq m}\left( p_{ij}^* \right)^{g_{ij}(G)} \left( 1- p_{ij}^* \right)^{1-g_{ij}(G)}\prod_{i=1}^m\left( p_{i}^* \right)^{s_{i}(G)} \left( 1- p_{i}^* \right)^{n-m-s_{i}(G)}\\
&=2^{-{{n-m}\choose 2}}\sum_{G\in A}\prod_{i=1}^m {(p_{i}^*)}^{g_{i m+1}(G)} \left( 1- p_{i}^* \right)^{1-g_{i m+1}(G)}\\
&=2^{-{{n-m}\choose 2}}2^{{{n-m-1}\choose 2}}\sum_{G\in A\cap B}\prod_{i=1}^m {p_{i}^*}^{g_{i m+1}(G)} \left( 1- p_{i}^* \right)^{1-g_{i m+1}(G)}\\
&=\sum_{G\in A\cap B}\left(\tfrac12\right)^{(n-m-1)}\prod_{i=1}^m {p_{i}^*}^{g_{i m+1}(G)} \left( 1- p_{i}^* \right)^{1-g_{i m+1}(G)}\\
&=P(Po-Bi[p_1^*,\dots, p_m^*,\tfrac12,\dots, \tfrac12]=x),
\end{aligned}
\ee
where 
\be
\begin{aligned}
&A=\left\lbrace G\in\cG_n: k_{m+1}(G)=x,\ g_{ij}(G)=0\ \forall i=1,\dots,m,\ j=1,\dots,m, m+2,\dots,n,\ i\neq j \right\rbrace,\\
&B=\left\lbrace G\in\cG_n: g_{ij}(G)=0\ \forall i=m+2,\dots,n,\ j=m+2,\dots,n,\ i\neq j \right\rbrace,
\end{aligned}
\ee
and $Po-Bi[p_1^*,\dots, p_m^*,\tfrac12,\dots, \tfrac12]$ is the Poisson-Binomial distribution given by the $m$ independent trials $p_i^*$, $i=1,\dots,m$, and the $n-m-1$ independent Bernoulli trials with the same success probability $\tfrac12$. This means that, for each $j=m+1,\dots,n$, the canonical probability of the degree of the $j$-th node is distributed as a Poisson-Binomial random variable with $n-1$ entries: $p_1^*,\dots,p_1^*,\tfrac12,\dots, \tfrac12$.\\


\section{Proof of the main theorem}
\label{S2}
The proof is based on two lemmas, which are stated and proved in Section~\ref{S2.1}. In Section~\ref{S2.2} Theorem~\ref{MainTheorem} is proved.


\subsection{Preparatory lemmas}
\label{S2.1}

The following lemma gives an expression for the relative entropy. 

\begin{lemma}
\label{lemma2}
If the constraint is on the partial degree sequence $(k_i^*)_{i=1}^m$, then the relative entropy in \eqref{eq:KL2} equals
\begin{equation}
\label{RelEntro}
S_n(\Pmic \mid \Pcan) = \tfrac12\log[\det(2\pi Q)] - \log T^*,
\end{equation} 
where $Q$ is the covariance matrix in \eqref{covQ} and $T^*$ is the error in \eqref{eq:T}. The matrix $Q=(q_{ij})$ takes the form 
\begin{equation}
\label{Q}
\begin{cases}
q_{ii} = \sum_{1\le j\le m, j \neq i} p_{ij}^*(1-p_{ij}^*) + (n-m)p_i^*(1-p_i^*), \quad 1 \leq i \leq m,\\
q_{ij} = p_{ij}^*(1-p_{ij}^*), \quad 1 \leq i \neq j \leq m.
\end{cases}
\end{equation}
\end{lemma}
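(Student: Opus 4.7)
The plan is to split the lemma into two parts: first, to obtain the shape $\tfrac12\log[\det(2\pi Q)] - \log T^*$, and second, to compute the entries of $Q$ explicitly.

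For the first part, I would simply invoke Formula~\ref{conj}, which is assumed to hold by hypothesis in Theorem~\ref{MainTheorem} (and consequently in any preparatory lemma used to prove it). Identifying the constraint $\vec C^* = \vec k^*$ with $K = m$ linearly independent components, the general statement
\begin{equation*}
S_n(\Pmic \mid \Pcan) = \log\frac{\sqrt{\det(2\pi Q)}}{T^*}
\end{equation*}
rewrites immediately as \eqref{RelEntro}. So the real content of the lemma is the second part: the explicit form of $Q$.

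For that, I would exploit the product structure of the canonical distribution in \eqref{canonical}. That formula says that, under $\Pcan$, the edge indicators $\{g_{ij}\}_{i<j}$ are \emph{mutually independent} Bernoulli random variables with
\begin{equation*}
g_{ij} \sim \mathrm{Ber}(p_{ij}^*) \text{ for } 1 \le i < j \le m, \qquad
g_{ij} \sim \mathrm{Ber}(p_i^*) \text{ for } 1 \le i \le m < j \le n,
\end{equation*}
and $g_{ij} \sim \mathrm{Ber}(1/2)$ for $m < i < j \le n$ (the latter contributes the prefactor $2^{-\binom{n-m}{2}}$ and plays no role for the degrees $k_1,\dots,k_m$). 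Writing $k_i = \sum_{j\ne i} g_{ij}$ for $1\le i \le m$, independence of the edges gives the diagonal entry
\begin{equation*}
q_{ii} = \mathrm{Var}(k_i) = \sum_{j\ne i} \mathrm{Var}(g_{ij}) = \sum_{\substack{1\le j\le m\\ j\ne i}} p_{ij}^*(1-p_{ij}^*) + (n-m)\,p_i^*(1-p_i^*),
\end{equation*}
matching the first line of \eqref{Q}.

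For the off-diagonal entries with $1 \le i \ne j \le m$, I would expand
\begin{equation*}
q_{ij} = \mathrm{Cov}(k_i, k_j) = \sum_{\ell \ne i}\sum_{\ell' \ne j} \mathrm{Cov}(g_{i\ell}, g_{j\ell'}).
\end{equation*}
By independence, $\mathrm{Cov}(g_{i\ell}, g_{j\ell'}) = 0$ unless the unordered pairs $\{i,\ell\}$ and $\{j,\ell'\}$ coincide, which forces $\ell = j$ and $\ell' = i$; the surviving term is $\mathrm{Var}(g_{ij}) = p_{ij}^*(1-p_{ij}^*)$, giving the second line of \eqref{Q}. No obstacle is expected here: the only subtlety is noting that the unconstrained half-edges in $\{m+1,\dots,n\}^2$ never enter the covariances because they do not appear in $k_1,\dots,k_m$, while the bipartite half-edges $g_{ij}$ with $i \le m < j$ contribute only to $q_{ii}$ (since each such edge touches only one of the constrained nodes). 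Assembling the two displays yields \eqref{Q} and completes the proof.
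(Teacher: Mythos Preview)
Your argument is correct. The first part is handled identically in both your proposal and the paper (invoking the Squartini--Garlaschelli formula). For the second part, however, you take a genuinely different route. The paper computes the entries of $Q$ by differentiating the log-likelihood $\mathcal{L}(\vec\theta)=\log\Pcan(G^*\mid\vec\theta)$ twice and using the exponential-family identity $\partial^2_{\theta_i\theta_j}\mathcal{L}(\vec\theta)\big|_{\vec\theta=\vec\theta^*} = -\mathrm{Cov}_{\Pcan}(k_i,k_j)$; the explicit form \eqref{Q} then drops out from differentiating the log of the partition function \eqref{partition}. You instead read off directly from the product form \eqref{canonical} that the edge indicators $g_{ij}$ are independent Bernoulli variables under $\Pcan$, and compute variances and covariances of $k_i=\sum_{j\ne i}g_{ij}$ by elementary bilinearity. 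Your approach is more self-contained and arguably more transparent here, since it avoids the log-likelihood machinery and makes explicit why the bipartite edges contribute only on the diagonal (each touches a single constrained node). The paper's approach has the advantage of being the generic recipe for any exponential-family constraint, not just degrees; but for this specific model both arrive at \eqref{Q} with comparable effort.
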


\begin{proof}
To compute $q_{ij}=\mathrm{Cov}_{\Pcan}(k_i,k_j)$, take the second order derivatives of 
the log-likelihood function
\be 
\begin{aligned}
&{\cal{L}}(\vec{\theta}) = \log\Pcan(G^* \mid \vec{\theta})\\
&= \log\left[2^{-{{n-m}\choose 2}}\prod_{1 \leq i < j \leq n} p_{ij}^{g_{ij}(G^*)} (1-p_{ij})^{(1-g_{ij}(G^*))} \prod_{i=1}^m p_{i}^{s_{i}(G^*)} \left( 1- p_{i} \right)^{n-m-s_{i}(G^*)}\right],
\end{aligned}
\ee
with
\be 
\quad p_{ij} = \frac{e^{-\theta_i - \theta_j}}{1+e^{-\theta_i - \theta_j}},\quad p_{i} = \frac{e^{-\theta_i}}{1+e^{-\theta_i}},
\ee
in the point $\vec{\theta}=\vec{\theta}^*$~\cite{GS}. 
It is easy to show that the first-order derivatives are~\cite{GL08}
\be
\frac{\partial}{\partial \theta_i}{\cal{L}}(\vec{\theta}\,)
= \langle k_i\rangle - k_i^*,
\quad 
\frac{\partial}{\partial \theta_i}{\cal{L}}(\vec{\theta}\,)\bigg|_{\vec{\theta}=\vec{\theta^*}}
= k_i^*-k_i^*=0
\ee
and the second-order derivatives are
\begin{equation}
\label{Covcalc}
\frac{\partial^2}{\partial \theta_i\partial \theta_j}{\cal{L}}(\vec{\theta})
\bigg|_{\vec{\theta}=\vec{\theta^*}} =\langle k_i\rangle\langle k_j\rangle - \langle k_i\,k_j\rangle 
= -\mathrm{Cov}_{\Pcan}(k_i,k_j).
\end{equation}
Taking the second-order derivatives of the log-likelihood function, we get \eqref{Q}.
The proof of \eqref{RelEntro} uses \cite[Formula 25]{GS}.
\end{proof}
The following lemma shows that a diagonal approximation of the matrix $Q$ is good for a $\delta$-tame partial degree sequence and $\alpha_n=m\log n$.

\begin{lemma}
\label{lemma3}
Under the $\delta$-tame condition,
\begin{equation}
\label{QQDcomp}
\log(\det Q_D) + o(m\log n) \leq \log(\det Q) \leq \log(\det Q_D)
\end{equation}
with $Q_D=\mathrm{diag}(Q)$ the matrix that coincides with $Q$ on the diagonal 
and is zero off the diagonal.
\end{lemma}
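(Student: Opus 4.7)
The plan is to reduce everything to controlling $\log\det(I+M)$, where $M:=Q_D^{-1/2}(Q-Q_D)Q_D^{-1/2}$, via the factorisation $\log\det Q=\log\det Q_D+\log\det(I+M)$. Since $Q$ is a covariance matrix and (as will be verified below) even strictly positive definite in the $\delta$-tame regime, $I+M$ is congruent to $Q$ and hence positive definite, with eigenvalues $\mu_1,\dots,\mu_m>0$. Because $M$ has zero diagonal, $\sum_i(\mu_i-1)=\mathrm{tr}(M)=0$, so $\sum_i\mu_i=m$. The upper bound $\log\det(I+M)\leq 0$ is then immediate from Hadamard's inequality for positive semidefinite matrices, equivalently from Jensen applied to $\sum_i\log\mu_i$ under the constraint that the $\mu_i$ average to $1$.

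For the lower bound I would first extract two quantitative facts about $Q$ from \eqref{Q} under the $\delta$-tame condition. The entries satisfy $q_{ii}\in[\delta'(1-\delta')(n-1),(n-1)/4]$ and $|q_{ij}|\leq 1/4$ for $i\neq j$, which makes $|M_{ij}|=O(1/n)$ off the diagonal, giving
\begin{equation}
\|M\|_F^2=\sum_{i\neq j}M_{ij}^2=O\!\left(\frac{m^2}{n^2}\right).
\end{equation}
More importantly, the identity $q_{ii}-\sum_{j\neq i}|q_{ij}|=(n-m)p_i^*(1-p_i^*)$ in \eqref{Q} shows that $Q$ is strictly diagonally dominant with slack at least $(n-m)\delta'(1-\delta')$, the whole slack coming from the unconstrained-node contribution. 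By Gershgorin, $\lambda_{\min}(Q)\geq(n-m)\delta'(1-\delta')$, and the variational characterisation $\mu_{\min}(I+M)=\min_u u^TQu/u^TQ_Du\geq\lambda_{\min}(Q)/\lambda_{\max}(Q_D)$ combined with $\lambda_{\max}(Q_D)\leq(n-1)/4$ yields
\begin{equation}
\mu_{\min}(I+M)\geq\frac{4\delta'(1-\delta')(n-m)}{n-1}.
\end{equation}

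To convert these into a lower bound on $\log\det(I+M)$, I would use the elementary scalar inequality $\log(1+\lambda)\geq\lambda-\lambda^2/\bigl(2\min(1,1+\lambda)\bigr)$, valid for all $\lambda>-1$ (for $\lambda\geq 0$ it is $\log(1+\lambda)\geq\lambda-\lambda^2/2$; for $-1<\lambda<0$ it follows from the crude tail bound $\sum_{k\geq 2}t^k/k\leq t^2/(2(1-t))$ in the series for $-\log(1-t)$). Summing over the eigenvalues $\lambda_i=\mu_i-1$ of $M$, using $\sum_i\lambda_i=0$ and $\min(1,1+\lambda_i)\geq\mu_{\min}(I+M)$ (which holds because $\mu_{\min}\leq 1$, as the $\mu_i$ average to $1$), I obtain
\begin{equation}
\log\det(I+M)\;\geq\;-\,\frac{\|M\|_F^2}{2\,\mu_{\min}(I+M)}\;=\;-\,O\!\left(\frac{m^2}{n(n-m)}\right).
\end{equation}
Condition~\eqref{Condmn} says $m/\bigl((n-m)\log n\bigr)\to 0$, and multiplying by $m/n\leq 1$ gives $m^2/\bigl(n(n-m)\log n\bigr)\to 0$, so the right-hand side is $o(m\log n)$, which matches the claim.

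The main obstacle is that $\|M\|_{op}$ is \emph{not} uniformly small in the regime covered by the theorem: when $m$ is a positive fraction of $n$, $\|M\|_{op}$ can be bounded away from zero, so one cannot rely on a naive Taylor expansion of $\log\det(I+M)$ around the zero matrix. The structural input that rescues the argument is the diagonal dominance slack $(n-m)p_i^*(1-p_i^*)$ contributed by the unconstrained nodes, absent in the full-degree-sequence case $m=n$ of \cite{GHR18}. This slack produces the operator-level bound $\mu_{\min}(I+M)=\Theta((n-m)/n)$ which, when combined with $\|M\|_F^2=O(m^2/n^2)$, makes condition~\eqref{Condmn} exactly strong enough to push the remainder to $o(m\log n)$.
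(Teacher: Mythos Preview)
Your proof is correct and takes a genuinely different route from the paper's. The paper invokes the Ipsen--Lee determinant approximation theorem \cite{IL03} as a black box: it works with the (non-symmetric) matrix $A=Q_D^{-1}Q_{\mathrm{off}}$, bounds the spectral radius $\rho(A)<1$ and $1+\lambda_{\min}(A)>A(\delta)$ via Gershgorin row sums, and obtains the lower bound $\log\det Q\geq\log\det Q_D-m/A(\delta)$, which is of order $-m^2/(n-m)$ and then needs condition~\eqref{Condmn} to be $o(m\log n)$. You instead symmetrise to $M=Q_D^{-1/2}Q_{\mathrm{off}}Q_D^{-1/2}$, use Hadamard/Jensen for the upper bound, and for the lower bound combine the trace identity $\mathrm{tr}\,M=0$ with the elementary scalar inequality $\log(1+\lambda)\geq\lambda-\lambda^2/(2\min(1,1+\lambda))$ summed over eigenvalues. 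Both arguments ultimately rest on the same structural feature: the diagonal-dominance slack $(n-m)p_i^*(1-p_i^*)$ coming from the unconstrained nodes.

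What your approach buys is a \emph{sharper} remainder, $-O\bigl(m^2/(n(n-m))\bigr)$ instead of the paper's $-O\bigl(m^2/(n-m)\bigr)$, because you keep the Frobenius-norm control $\|M\|_F^2=O(m^2/n^2)$ rather than the crude bound $\rho(A)<1$. In fact your bound is strong enough that the lemma holds for \emph{any} $1\leq m\leq n-1$: since $m\leq n$, one has $m^2/(n(n-m)\log n)\leq 1/((n-m)\log n)\leq 1/\log n\to 0$, so you do not actually need condition~\eqref{Condmn} at this step (you invoke it, but it is superfluous for the lemma as you have proved it). The paper's proof, by contrast, genuinely requires~\eqref{Condmn} in the regime $(n-m)/m\to 0$. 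Your argument is also self-contained, avoiding the external citation.
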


\begin{proof}
Use \cite[Theorem 2.3]{IL03}, which says that if
\begin{itemize}
\item[(1)] $\det(Q)$ is real, 
\item[(2)] $Q_D$ is non-singular with $\det(Q_D)$ real,
\item[(3)] $\lambda_i (A)>-1$, $1 \leq i \leq m$,
\end{itemize}
then 
\begin{equation}
\label{ILbds}
e^{-\frac{m\rho^2(A)}{1+\lambda_{\min}(A)}} \det Q_D \leq \det Q \leq \det Q_D.
\end{equation}
Here, $A=Q_D^{-1}Q_{\mathrm{off}}$, with $Q_{\mathrm{off}}$ the matrix that coincides with 
$Q$ off the diagonal and is zero on the diagonal, $\lambda_i(A)$ is the $i$-th eigenvalue of $A$
(arranged in decreasing order), $\lambda_{\mathrm{min}}(A) = \min _{1 \leq i \leq n}\lambda_i(A)$, 
and $\rho(A) = \max_{1 \leq i \leq n}|\lambda_i(A)|$.
\\
We verify (1)--(3).\\ 
\medskip\noindent
(1) Since $Q$ is a symmetric matrix with real entries, $\det Q$ exists and is real.

\medskip\noindent
(2) This property holds thanks to the $\delta$-tame condition and Lemma \ref{deltaTamep}. In fact 
\begin{equation}
\label{qij}
0 < \delta^{2} \leq q_{ij} \leq (1-\delta)^{2} < 1,
\end{equation}
and
\begin{equation}
\label{qii}
(m-1)\delta^2 + (n-m)\delta '^2 \leq q_{ii} = \leq (m-1)(1-\delta)^2 + (n-m)(1-\delta ')^2.  
\end{equation}

\medskip\noindent
(3) It is easy to show that $A=(a_{ij})$ is given by
\begin{equation}
\label{aijrel}
a_{ij} = \left\{\begin{array}{ll}
\frac{q_{ij}}{q_{ii}}=\frac{p_{ij}^*(1-p_{ij}^*)}{\sum_{1\le k\le m, k \neq i} p_{ik}^*(1-p_{ik}^*) + (n-m)p_i^*(1-p_i^*)}, &1 \leq i \neq j \leq m\\
0 &1 \leq i=j \leq m,
\end{array}
\right.
\end{equation}
where $q_{ij}$ is given by \eqref{Q}. The Gershgorin circle theorem says the eigenvalues of the matrix $A$ satisfy
\begin{equation}
\label{Gershgorin}
\mid\lambda_i(A)\mid \le R_i=\sum_{j\neq i}a_{ij} = \frac{\sum_{1\le k\le m, k \neq i} p_{ik}^*(1-p_{ik}^*)}{\sum_{1\le k\le m, k \neq i} p_{ik}^*(1-p_{ik}^*) + (n-m)p_i^*(1-p_i^*)},\quad 1\le i \le m.
\end{equation}
Using the $\delta$-tame condition, we find the bound
\begin{equation}
\label{MaxAV}
\mid\lambda_i(A)\mid \le\max_{1\le i \le m} R_i < 1-A(\delta),  
\end{equation}
with $A(\delta) = \frac{(n-m)\delta'}{(m-1)(1-\delta)^2 + (n-m)(1-\delta')}$. In principle, $A(\delta)$ also depends on $\delta'$, but $\delta'$ is itself function of $\delta$. Equation \eqref{MaxAV} immediately gives $\rho(A) < 1$, namely
\be
-\frac{m\rho^2(A)}{1+\lambda_{\min}(A)} > -\frac{m}{1+\lambda_{\min}(A)} .
\ee 
Next we show that
\be
-\frac{m}{1+\lambda_{\min}(A)} = o(m\log n).
\ee 
Together with \eqref{ILbds} this will settle the claim in \eqref{QQDcomp}. We must show that
\be
\lim_{n\to \infty} (1+\lambda_{\min}(A))\log n=\infty.
\ee
Using equation \eqref{MaxAV} again, it follows $1+\lambda_{\min}(A) > A(\delta)$. Therefore it suffices to prove that
\begin{equation}
\label{ToProve}
\lim_{n\to \infty}A(\delta)\log n = \infty.
\end{equation}
The result is trivial when $A(\delta)$ is constant ($\tfrac{n-m}{m}\to$ constant) or $A(\delta)\to\infty$ ($\tfrac{n-m}{m}\to \infty$) . On the other hand, when $A(\delta)\to 0$ ($\tfrac{n-m}{m}\to 0$), the condition $\tfrac{n-m}{m}\log n\to \infty$ is needed to conclude the proof.
\end{proof}


\subsection{Proof of Theorem~\ref{MainTheorem}}
\label{S2.2}

\begin{proof}
When $\alpha_n = \tfrac{m\log n}{2}$, Lemma~\ref{lemma2} says 
\begin{equation}
\label{finalrelent}
\lim_{n \to \infty} \frac{S_n(\Pmic \mid \Pcan)}{\alpha_n} 
= \lim_{n \to \infty}\frac{\log 2\pi}{m\log n} 
+ \lim_{n \to \infty}\frac{\log(\det Q)}{m\log n} - \lim_{n \to \infty}\frac{\log T^*}{2m\log n}.
\end{equation}
The last term (the error) tends to zero. In fact, in \cite{GS} it is proved that $\lim_{n \to \infty}\frac{\log T^*}{m\log n}=0$ unless the number of eigenvalues of $Q$ that have a finite limit as $n \to \infty$ which is indeed the case when a partial $\delta$-tame degree sequence is put as a constraint and $\alpha_n = \tfrac{m\log n}{2}$.\\
Using the $\delta$-tame condition, we get from Lemma~\ref{lemma3} that
\begin{equation}
\lim_{n \to \infty} \frac{\log(\det Q)}{m\log n} 
= \lim_{n \to \infty}\frac{\log(\det Q_D)}{m\log n}.
\end{equation}
To conclude the proof it therefore suffices to show that
\begin{equation}
\label{qdnf}
\lim_{n \to \infty} \frac{\log(\det Q_D)}{m\log n} = 1.
\end{equation} 
Using \eqref{qii}, we have
\begin{equation}
\label{qdnf2}
\begin{aligned}
&\frac{\log[(m-1)\delta^2 + (n-m)\delta']}{\log n} \leq \frac{\sum_{i=1}^m\log q_{ii}}{m\log n} 
= \frac{\log(\det Q_D)}{m\log n}\\ 
&\leq \frac{\log[(m-1)(1-\delta)^2 + (n-m)(1-\delta')]}{\log n}.
\end{aligned}
\end{equation}
Both sides tend to 1 as $n\to\infty$, and so \eqref{qdnf} follows. 
\end{proof}


\appendix


\section{Appendix}
\label{appA}
In this appendix we identify the structure of the canonical ensemble when the constraint is put on the partial degree sequence for the first $m<n$ nodes. The partial degree sequence $\vec{k}(G) = (k_i(G))_{i=1}^m$ is set to a \emph{specific $m$-dimensional} vector $\vec{k}^*$, which is assumed to be \emph{graphical}, i.e., there is at least one graph $G^*\in\cG_n$ with partial degree sequence $\vec{k}^*$. The constraint is therefore
\be
\vC^* = \vec{k}^*= (k_i^*)_{i=1}^m \in \{1,2,\dots ,n-2\}^m.
\ee 
The canonical ensemble has Hamiltonian $H(G,\vec{\theta})
=\sum_{i=1}^m \theta_i k_i(G)$, where $G$ is a graph belonging to $\cG_{n}$, and 
$k_i(G) = \sum_{j\neq i} g_{ij}(G)$ is the degree of node $i$. 
It is easy to transform the Hamiltonian into 
\begin{equation}
H(G,\vec{\theta}) = \sum_{1\le i<j \le m}(\theta_i + \theta_j)g_{ij}(G) + \sum_{i=1}^m \theta_i \sum_{j=m+1}^n g_{ij}(G) 
\end{equation}
Using this form, we see that the partition function equals 
\be
\label{partition}
\begin{aligned}
&Z(\vt)=\sum_{G\in\cG_{n}} e^{-H(G,\vec{\theta})}= \sum_{G\in\cG_{n}}\prod_{1\le i<j\le m} e^{-(\theta_i+\theta_j) g_{ij}(G)}\prod_{i=1}^m\prod_{j=m+1}^n e^{-\theta_i g_{ij}(G)} \\
&=2^{{n-m}\choose 2}\prod_{1\le i<j\le m} (1+e^{-(\theta_i+\theta_j)})\prod_{i=1}^m\prod_{j=m+1}^n (1+e^{-\theta_i})
\\
&=2^{{n-m}\choose 2}\prod_{1\le i<j\le m} (1+e^{-(\theta_i+\theta_j)})\prod_{i=1}^m (1+e^{-\theta_i})^{(n-m)}.
\end{aligned}
\ee
Inserting the partition function into the canonical expression, we get
\begin{equation}
\Pcan(G\mid \theta) = 2^{-{{n-m}\choose 2}}\prod_{1 \leq i<j \leq m} p_{ij} ^{g_{ij}(G)} \left( 1- p_{ij} \right)^{1-g_{ij}(G)}\prod_{i=1}^m p_{i} ^{s_{i}(G)} \left( 1- p_{i} \right)^{n-m-s_{i}(G)}
\end{equation}
with 
\be
p_{ij} = \frac{e^{-\theta_i-\theta_j}}{1 + e^{-\theta_i-\theta_j}},\qquad p_{i} = \frac{e^{-\theta_i}}{1 + e^{-\theta_i}},\qquad s_i(G)=\sum_{j=m+1}^n g_{ij}(G).
\ee 
It remains to tune the Lagrange multipliers to the values such that the average constraint equals the vector $\vC^* = \vec{k}^*= (k_i^*)_{i=1}^m \in \{1,2,\dots ,n-2\}^m$.
The average energy of the $i$-th degree with respect to the probability distribution $\Pcan(\cdot \mid \theta)$ corresponds to the derivative with respect to $\theta_i$ of the logarithm of the partition function (free energy). This means that the values $(\theta^*_i)_{i=1}^m$ must satisfy
\begin{equation}
\langle k_i \rangle = \sum_{1\le j\le m\atop j \neq i}p_{ij}^* + (n-m)p_{i}^* = k_i^*, \qquad 1\le i\le m
\end{equation}
with 
\be
p_{ij}^* = \frac{e^{-\theta_i^*-\theta_j^*}}{1 + e^{-\theta_i^*-\theta_j^*}},\qquad p_{i}^* = \frac{e^{-\theta_i^*}}{1 + e^{-\theta_i^*}}.
\ee 
The canonical ensemble therefore takes the form
\begin{equation}
\label{CanonicalMixed}
\Pcan(G) = 2^{-{{n-m}\choose 2}}\prod_{1 \leq i<j \leq m} {p^*_{ij}} ^{g_{ij}(G)} \left( 1- {p^*_{ij}} \right)^{1-g_{ij}(G)}\prod_{i=1}^m {p^*_{i}}^{s_{i}(G)} \left( 1- {p^*_{i}} \right)^{n-m-s_{i}(G)}.
\end{equation}
The expression in \eqref{CanonicalMixed} has an interpretation. Indeed, the canonical formula che be split into two parts: 
\be
\Pcan(G)=\Pcan^U(G) \Pcan^B(G)\ 2^{-{{n-m}\choose 2}},
\ee
with 
\be 
\Pcan^U(G)=\prod_{1 \leq i<j \leq m} {p^*_{ij}} ^{g_{ij}(G)} \left( 1- {p^*_{ij}} \right)^{1-g_{ij}(G)}
\ee
and 
\be 
\Pcan^B(G)=\prod_{i=1}^m {p^*_{i}}^{s_{i}(G)} \left( 1- {p^*_{i}} \right)^{n-m-s_{i}(G)}.
\ee
The unipartite probability, $\Pcan^U(G)$, is the canonical probability obtained when the constraint is put on the full degree sequence $\vec{u}^*= (u_i^*)_{i=1}^m$ on the set $\cG_m$. The constrained degree sequence is precisely $u_i^* = \sum_{1\le j\le m\atop j \neq i}p_{ij}^*$. The bipartite probability $\Pcan^B(G)$ is the canonical bipartite probability obtained when the constraint is put only on the top layer of a bipartite graph. In this case the configuration space is the set of bipartite graphs $\cG_{m,n-m}$ with $m$ nodes on the top layer and $n-m$ nodes on the bottom layer. The constrained top layer degree sequence is $\vec{b}^*= (b_i^*)_{i=1}^m$, where $b_i^* = (n-m)p_i^*$. The third factor $2^{-{{n-m}\choose 2}}$ is the inverse of the number of possible (unconstrained) graphs with $n-m$ nodes. In conclusion, the canonical probability in \eqref{CanonicalMixed} can be interpreted as the product of two canonical probabilities, $\Pcan^U(G)$ and $\Pcan^B(G)$, and the number $2^{-{{n-m}\choose 2}}$. Both canonical probabilities have an $m$-dimensional degree sequence as a constraint $\vec{u}^*= (u_i^*)_{i=1}^m$ and $\vec{b}^*= (b_i^*)_{i=1}^m$, put on the respective configuration spaces. Furthermore, two degree sequences sum up to the original degree sequence, namely,
\be 
u_i^* + b_i^* = k_i^* \qquad \forall i=1,\dots,m.
\ee
For this reason $(p_{ij}^*)_{i,j=1}^m$ are called the unipartite probabilities and $(p_{i}^*)_{i=1}^m$ the bipartite probabilities.


\section{Appendix}
\label{appB}
In this appendix we identify the structure of the $\delta$-tame condition when a partial degree sequence $(k_i^*)_{i=1}^m$ is put as a constraint on $\cG_n$. The definition comes from the situation where a full degree sequence $(k_i^*)_{i=1}^n$ is fixed on $\cG_n$ \cite{BH}. In the full degree sequence situation the canonical probability takes the form 
\begin{equation}
\label{canonicalO}
\Pcan(G) = \prod_{1 \leq i<j \leq n}\left( p_{ij}^* \right)^{g_{ij}(G)} \left( 1- p_{ij}^* \right)^{1-g_{ij}(G)}
\end{equation}
with 
\be
\label{canonical1O} 
p_{ij}^* = \frac{e^{-\theta_i^*-\theta_j^*}}{1 + e^{-\theta_i^*-\theta_j^*}}\quad {\forall\ i\neq j},
\ee 
and with the vector of Lagrange multipliers $\vec{\theta}^*=(\theta_i^*)_{i=1}^n$ tuned such that 
\begin{equation}
\label{canonical2O}
\langle k_i \rangle = \sum_{1\le j\le n\atop j \neq i}p_{ij}^* = k_i^*, \qquad 1\le i\le n.
\end{equation}
The degree sequence $(k_i^*)_{i=1}^n$ is said to be $\delta$-tame when there exists a $\delta\in (0,\tfrac12]$ such that, for each $1\le i\neq j\le n$, the canonical probabilites satisfy 
\be 
\delta < p_{ij}^* < 1-\delta.
\ee
\begin{definition}[$\delta$-tame partial degree sequence]
\label{deltaTamep}
We say that such a sequence is $\delta$-tame when there exists a $\delta\in (0,\tfrac12]$ such that, for each $1\le i\neq j\le m$, the canonical probabilites defined in \eqref{canonical}--\eqref{canonical2} satisfy 
\be
\delta < p_{ij}^* < 1-\delta \qquad \forall\ 1\le i\neq j \le m.
\ee
\end{definition}
\begin{lemma}
If $(k_i^*)_{i=1}^m$ is a partial degree sequence on $\cG_n$ and it is $\delta$-tame in the sense of Definition \ref{deltaTamep}, then the canonical bipartite probabilities satisfy 
\be
\delta' < p_{i}^* < 1-\delta' \qquad \forall\ 1\le i\neq j \le m,
\ee
for some $\delta'\in (0,\tfrac12]$.
\end{lemma}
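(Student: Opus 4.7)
The plan is to translate the hypothesis $\delta < p_{ij}^* < 1-\delta$ into a bound on the individual Lagrange multipliers $\theta_i^*$, and then translate that bound back to a lower and upper bound on $p_i^*$. Since $p_{ij}^* = \sigma(-\theta_i^*-\theta_j^*)$ and $p_i^* = \sigma(-\theta_i^*)$ for the same sigmoid $\sigma(x)=1/(1+e^{-x})$, everything reduces to sigmoid inversion once the $\theta_i^*$ are under control.

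Concretely, setting $M := \log\tfrac{1-\delta}{\delta}$, the $\delta$-tame hypothesis is equivalent to $|\theta_i^*+\theta_j^*|\le M$ for all $1\le i\neq j\le m$. Assuming $m\ge 3$, for each index $i$ I pick two distinct indices $j,k\neq i$ and use the algebraic identity
\begin{equation}
2\theta_i^* = (\theta_i^*+\theta_j^*) + (\theta_i^*+\theta_k^*) - (\theta_j^*+\theta_k^*),
\end{equation}
which together with the triangle inequality yields $|\theta_i^*|\le \tfrac{3M}{2}$ for every $1\le i\le m$. Applying $\sigma$, which is monotone increasing, to $-\theta_i^*\in[-\tfrac{3M}{2},\tfrac{3M}{2}]$ gives
\begin{equation}
\frac{1}{1+e^{3M/2}} \le p_i^* \le \frac{e^{3M/2}}{1+e^{3M/2}},
\end{equation}
so one may take
\begin{equation}
\delta' \;=\; \frac{1}{1+\left(\frac{1-\delta}{\delta}\right)^{3/2}},
\end{equation}
which indeed lies in $(0,\tfrac12]$ and satisfies $\delta'<\delta$ for $\delta\in(0,\tfrac12)$. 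This is precisely the $\delta'$ quoted in equations \eqref{deltatameondegrees2}--\eqref{deltatameondegrees3} in the main text, which is a reassuring consistency check.

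The main obstacle I anticipate is the small-$m$ edge cases. For $m=1$ the $\delta$-tame condition is vacuous (no pairs $i\ne j$ exist) and the lemma cannot be proved without extra assumptions; for $m=2$ there is only one pair, hence only one bound $|\theta_1^*+\theta_2^*|\le M$, which is insufficient to separate the two multipliers. In these degenerate regimes one must supplement the argument with the balance equation \eqref{canonical2}, using that $k_i^*$ lies in the admissible range $\{1,\dots,n-2\}$ to pin down $p_i^*$ directly via $(n-m)p_i^* = k_i^* - u_i^*$ with $u_i^*\in[0,m-1]$. The cleanest presentation is therefore to state the proof for $m\ge 3$, where the three-index identity does all the work, and to dispatch the $m\le 2$ cases separately (or to tacitly restrict the lemma's scope to $m\ge 3$, which is the only regime relevant for Theorem \ref{MainTheorem}).
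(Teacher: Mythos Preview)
Your proof is correct and is essentially the same as the paper's, just written in additive notation with the Lagrange multipliers $\theta_i^*$ rather than in multiplicative notation with $x_i=e^{-\theta_i^*}$: the paper bounds $x_i x_j$, forms $x_i^2 x_j x_k=(x_ix_j)(x_ix_k)$, and divides by the bound on $x_jx_k$, which is exactly the exponentiated version of your identity $2\theta_i^*=(\theta_i^*+\theta_j^*)+(\theta_i^*+\theta_k^*)-(\theta_j^*+\theta_k^*)$, and both arrive at the same $\delta'=\bigl(1+((1-\delta)/\delta)^{3/2}\bigr)^{-1}$. Your explicit discussion of the $m\le 2$ edge cases is a point the paper leaves implicit; the paper silently assumes three distinct indices are available.
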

\begin{proof}
The canonical probabilities, tuned with the proper $(\theta_{i}^*)_{i=1}^m$, satisfy
\be
p_{ij}^*=\frac{x_i x_j}{1+x_i x_j},\qquad p_{i}^* = \frac{x_i}{1 + x_i},\quad x_i=e^{-\theta_i^*}.
\ee
Since $(k_{i})_{i=1}^m$ is a partial $\delta$-tame degree sequence, Definition \ref{deltaTamep} says that
\be 
\delta < p_{ij}^* <(1-\delta).
\ee
From this it follows that
\be 
\label{deltaproof1}
\frac{\delta}{1-\delta}<x_i x_j< \frac{1-\delta}{\delta}.
\ee
Using \eqref{deltaproof1} for different indices $i,j,k$, we get
\be 
\left(\frac{\delta}{1-\delta}\right)^2<x_i^2 x_j x_k<\left( \frac{1-\delta}{\delta}\right)^2.
\ee
Using again \eqref{deltaproof1} for the indices $j$ and $k$, we get 
\be 
\label{deltaproof2}
\left(\frac{\delta}{1-\delta}\right)^{3/2}<x_i<\left( \frac{1-\delta}{\delta}\right)^{3/2}.
\ee
Using \eqref{deltaproof2} and $p_{i}^* = \frac{x_i}{1 + x_i} = \frac{1}{1 + \frac{1}{x_i}}$, we obtain that 
\be 
\delta' < p_i^* < 1-\delta'
\ee
with $\delta' = \frac{1}{1+(\frac{1-\delta}{\delta})^{3/2}}$. Note that $0<\delta\le \tfrac12$ implies $0<\delta'\le \tfrac12$.
\end{proof}



\begin{thebibliography}{99}

\bibitem{SMG15}
T.\ Squartini, R.\ Mastrandrea and D.\ Garlaschelli,
Unbiased sampling of network ensembles,
New J.\ of Phys.\ 17, 023052 (2015). 

\bibitem{SdMdHG15}
T.\ Squartini, J.\ de Mol, F.\ den Hollander and D.\ Garlaschelli, 
Breaking of ensemble equivalence in networks,
Phys.\ Rev.\ Lett.\ 115, 268701 (2015).

\bibitem{GHR17}
G.\ Garlaschelli, F.\ den Hollander and A.\ Roccaverde,
Ensemble equivalence in random graphs with modular structure,
J.\ Phys.\ A: Math.\ Theor.\ 50, 015001 (2017).

\bibitem{GHR18}
G.\ Garlaschelli, F.\ den Hollander and A.\ Roccaverde,
Covariance structure behind breaking of ensemble equivalence in random graphs,
arXiv:1711.04273.

\bibitem{dHMRS17}
F.\ den Hollander, M.\ Mandjes, A.\ Roccaverde and N.J.\ Starreveld,
Ensemble equivalence for dense graphs, 
arXiv:1703.08058.

\bibitem{GS}
T.\ Squartini and D.\ Garlaschelli, 
Reconnecting statistical physics and combinatorics beyond ensemble equivalence,
arXiv:1710.11422.

\bibitem{W93}
Y.H.\ Wang,
On the number of successes in independent trials,
Stat.\ Sin.\ 3, 295--312 (1993).

\bibitem{GL08}
D.\ Garlaschelli and M.I.\ Loffredo, 
Maximum likelihood: Extracting unbiased information from complex networks,
Phys.\ Rev.\ E 78, 015101 (2008).

\bibitem{T18}
H.\ Touchette, 
Asymptotic equivalence of probability measures and stochastic processes.
J.\ Stat.\ Phys.\ 170, 962 (2018).

\bibitem{T15}
H.\ Touchette,
Equivalence and nonequivalence of ensembles: thermodynamic, macrostate, and measure levels.
J.\ Stat.\ Phys.\ 159, 987--1016 (2015).

\bibitem{vdH17}
R.W.\ van der Hofstad,
\emph{Random Graphs and Complex Networks},
Cambridge University Press, 2017.

\bibitem{BH}
A.\ Barvinok and J.A.\ Hartigan,
The number of graphs and a random graph with a given degree sequence,
Random Structures \& Algorithms 42, 301--348 (2013).

\bibitem{IL03}
I.C.F.\ Ipsen and D.J.\ Lee, 
\emph{Determinant Approximations}, North Carolina State University, 
Center for Research in Scientific Computation, 2003.

\bibitem{Z04}
X.-D.\ Zhang, 
The smallest eigenvalue for reversible Markov chains, 
Linear Algebra and its Applications 383, 175--186 (2004).

\bibitem{Campa}
A.\ Campa, T.\ Dauxois and S.\ Ruffo, 
\emph{Statistical mechanics and dynamics of solvable models with long-range interactions},
Phys. Rep. 57--159 (2009).


\end{thebibliography}
\end{document}